\documentclass[11pt]{article}

\usepackage[margin=1in]{geometry}
\usepackage[T1]{fontenc}
\usepackage[utf8]{inputenc}
\usepackage{lmodern}
\usepackage{microtype}
\usepackage{amsmath,amssymb,amsthm}
\usepackage{booktabs}
\usepackage{xcolor}
\usepackage{graphicx}
\usepackage{float}
\usepackage[hidelinks]{hyperref}
\hypersetup{
  pdftitle={Fresh-Challenge VDF Attestations for Model-Relative Response Latency},
  pdfauthor={Ansar Yesmukhanov and Aruzhan Tlessova}
}

\newtheorem{definition}{Definition}

\newtheorem{lemma}{Lemma}
\newtheorem{claim}{Claim}
\newtheorem{proposition}{Proposition}

\newcommand{\Verify}{\operatorname{Verify}}
\newcommand{\Eval}{\operatorname{Eval}}
\newcommand{\Setup}{\operatorname{Setup}}

\title{Fresh-Challenge VDF Attestations for\newline Model-Relative Response Latency}
\author{Ansar Yesmukhanov \and Aruzhan Tlessova\\[0.5em]\small Nazarbayev University, Astana, Kazakhstan}
\date{}

\begin{document}
\maketitle

\begin{abstract}
Can a finite verifier obtain public, model-relative evidence about response latency for sequential computation? Verifiable delay functions (VDFs) make this possible in principle: evaluation requires $T$ sequential steps, whereas verification is efficient in the security parameter and polylogarithmic in the numerical value of $T$ for standard constructions. Thus a delay can be astronomically large to evaluate yet succinctly represented and feasibly checked. A VDF proof for a chosen message alone is insufficient because it may be precomputed. We specify and analyze \emph{Fresh-Challenge VDF Attestations} (FCLA), a protocol composition that binds a VDF to an unpredictable public challenge, a message, and independently auditable release and receipt records. Under explicit assumptions about VDF sequentiality, the challenge source, witness logs, and a calibrated upper bound on an adversary's sequential evaluation rate, an accepted FCLA transcript is inconsistent with post-challenge generation by an adversary in that bounded model. The result neither identifies a named claimant nor excludes relaying, outsourcing, or a faster unmodeled machine. A benchmark of the public reference implementation confirms the expected empirical separation between evaluation and verification on one documented machine. Our contribution is a protocol/design analysis and benchmarked reference implementation layer, not a new VDF construction or cryptographic primitive.
\end{abstract}

\section{Introduction}

Cryptographic protocols routinely turn computational asymmetry into public evidence. Proof-of-work makes resource expenditure costly; time-lock puzzles delay access to a value; and verifiable delay functions (VDFs) provide a particularly sharp notion of sequential work: evaluation takes a prescribed number of inherently sequential steps while verification is efficient and public \cite{dworknaor,rsw96,boneh2018vdf,wesolowski2019}. The asymmetry invites an unusual but precise question: can a public transcript demonstrate that a response could not have been produced by a bounded physical evaluator after a challenge became known?

The answer can be yes, conditionally. Suppose a challenge is released at time $t_0$, a response arrives at time $t_1$, and a valid proof requires $T$ sequential steps after the challenge is known. If a declared evaluator model permits at most $\rho$ such steps per second and $t_1-t_0<T/\rho$, then an accepted response is inconsistent with post-challenge generation by that evaluator model. The verifier need not repeat the astronomical evaluation. This is a model-relative latency claim, not a claim that a named claimant performed the computation.

This paper identifies both the strongest defensible conclusion and its boundary. A valid VDF transcript certifies a relation between public inputs and outputs under VDF security assumptions. Without a fresh challenge and a deadline it does not establish that the computation happened during any particular interval. Without an identity-bearing secret or trusted external binding, it does not identify its producer. These are not weaknesses of the construction; they specify exactly what the evidence means.

Our contribution is a protocol and analysis rather than a new VDF construction:

\begin{itemize}
  \item We give a protocol/design analysis of a model-relative response-latency claim, including the relationship among delay $T$, a calibrated sequential-rate bound $\rho$, and an auditable response window.
  \item We show why a message-bound VDF is not itself a response-speed proof, then specify FCLA to prevent precomputation using a fresh public challenge.
  \item We state an informal FCLA security proposition and an identity-blindness observation that make the protocol's attribution and threat-model boundary explicit.
  \item We analyze astronomical delays, including why large $T$ values remain compact to represent and feasible to verify in principle, and document the representation limits of a practical implementation.
  \item We benchmark the public reference implementation over six delays, reporting evaluation and verification times, measurement variance, and the observed sequential iteration rate on a documented machine.
\end{itemize}

The intent is constructive. A public board that verifies VDFs can be an engaging implementation and useful educational artifact. To become a research claim, it should describe itself as an implementation of a standard primitive and make the protocol's inference boundary explicit.

\section{Background and model}

\subsection{Verifiable delay functions}

We use the standard informal interface of a VDF \cite{boneh2018vdf,wesolowski2019}:
\[
  \Setup(1^\lambda,T)\rightarrow pp,\qquad
  \Eval(pp,x,T)\rightarrow (y,\pi),\qquad
  \Verify(pp,x,T,y,\pi)\in\{0,1\}.
\]
Correctness requires that honestly evaluated outputs verify. Sequentiality informally says that, after $x$ becomes available, no efficient adversary can produce $(y,\pi)$ substantially faster in \emph{sequential depth} than the prescribed delay. Efficient public verification means that checking need not repeat the $T$ sequential steps. Concrete constructions require additional assumptions about the underlying group and setup; class groups of imaginary quadratic fields and groups of unknown order are common choices \cite{wesolowski2019}.

For a Wesolowski-style construction, the required properties are correctness, proof soundness, uniqueness (where supplied by the construction), and sequentiality in the chosen group of unknown order. These properties depend on the selected setup and parameter-generation method: a deployment must specify the group, its security rationale, whether a trusted setup exists, and why no party retained a trapdoor that defeats the intended assumption \cite{boneh2018vdf,wesolowski2019}. The protocol does not claim to improve any of these VDF properties.

Sequentiality is not a statement about universal wall-clock time. Let $\rho$ be an upper bound on the number of modeled sequential VDF steps per second available to an adversary. A delay $T$ yields the conditional lower bound $T/\rho$ seconds. The value of $\rho$ is not supplied by cryptography: it is an engineering, hardware, and deployment assumption.

\subsection{Astronomical delays remain describable}

The evaluator performs $T$ sequential steps, but the verifier does not expand $T$ into a list of steps. In a Wesolowski-style VDF, verification consists of a small number of group operations together with exponent arithmetic derived from $T$; computing quantities such as $2^T \bmod \ell$ is polylogarithmic in the numerical value of $T$, hence polynomial in its bit-length $\log T$, subject to the construction and group-operation costs \cite{wesolowski2019}. The proof itself has constant-size group elements. Thus a delay such as $T=10^{100}$ has a representation of only about 333 bits even though direct evaluation is fantastically large.

This asymmetry is what makes the proposed claim meaningful. Given a public challenge released at $t_0$ and a witnessed receipt at $t_1$, select $T$ so that
\[
  T > \rho(t_1-t_0+\epsilon),
\]
where $\epsilon$ covers clock uncertainty and network delay. A valid proof then conflicts with the declared evaluator bound, subject to the assumptions in Section~5.1. The statement is deliberately finite and falsifiable: it need not invoke an undefined upper bound on all possible machines.

There is an important implementation distinction. The current VDP demonstration represents $T$ as Rust \texttt{u64}, whose maximum is $2^{64}-1\approx1.84\times10^{19}$. It can demonstrate long delays, but it cannot encode arbitrary astronomical values. A deployment that needs delays beyond this range must use a VDF implementation whose iteration-count interface, serialization, and verifier all accept a big-integer $T$. This is an engineering extension, not a change to the FCLA inference rule.

\subsection{Four claims that must be kept separate}

\begin{center}
\begin{tabular}{p{0.25\linewidth}p{0.66\linewidth}}
\toprule
Claim & What is required \\
\midrule
Correct evaluation & A valid VDF proof under public parameters. \\
Sequential work & A VDF sequentiality assumption, a fixed delay $T$, and an input unavailable before the claimed evaluation. \\
Response latency & Sequential work plus an unpredictable challenge and trustworthy evidence of challenge release and response receipt. \\
Identity or authorship & An identity-binding mechanism, such as a signature key, credential, or trusted registration process. \\
\bottomrule
\end{tabular}
\end{center}

The last row is not a defect in VDFs. It is a category distinction. Authentication tells a verifier \emph{which principal} made a statement. A VDF has no secret tied to a principal and therefore cannot furnish that binding by itself.

\section{Why a capability-only VDF transcript cannot authenticate}

\begin{definition}[Capability-only transcript]
A transcript $\tau$ is capability-only if acceptance is decided by a public predicate $V(\tau)$ and $\tau$ contains no value whose production or possession is uniquely bound to a named principal. In particular, a message $m$, public parameters $pp$, a delay $T$, and a VDF output $(y,\pi)$ are capability-only.
\end{definition}

\begin{lemma}[Identity blindness]
Let $V$ be any public verifier for capability-only transcripts. Let $P_0$ and $P_1$ be two producers whose output distributions over accepted transcripts are identical. For every (possibly randomized) verifier $A$ that receives only an accepted transcript,
\[
\Pr[A(\tau)=0\mid \tau\leftarrow P_0]
=
\Pr[A(\tau)=0\mid \tau\leftarrow P_1].
\]
Consequently, no such verifier can authenticate whether $P_0$ or $P_1$ produced $\tau$ with advantage over random guessing.
\end{lemma}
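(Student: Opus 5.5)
The plan is a direct data-processing argument: since $A$ sees only $\tau$, its output distribution is a function of the distribution of $\tau$ alone. First I formalize the setup. Let $D_b$ be the distribution of accepted transcripts produced by $P_b$; that is, the output distribution of $P_b$ conditioned on $V(\tau)=1$, which is what the hypothesis compares. Model the randomized verifier $A$ as a deterministic function $A(\tau;r)$ with random tape $r$ drawn from a distribution $R$ independent of $\tau$. The hypothesis is that $D_0=D_1=:D$ as distributions over transcripts. I will restrict to the discrete case; transcripts are finite bit strings, so this is without loss of generality.

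The key computation is the identity
\[
\Pr[A(\tau)=0\mid \tau\leftarrow P_b]=\sum_{t}D_b(t)\,\Pr_{r\leftarrow R}[A(t;r)=0].
\]
This holds because $r$ is independent of $\tau$ and $A$ receives nothing beyond $\tau$. The inner factor does not depend on $b$, and $D_0=D_1$, so the two sums coincide term by term. This gives the displayed equality.

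For the consequence, consider the authentication game. A bit $b$ is chosen uniformly, $\tau\leftarrow P_b$ is produced, and $A$ outputs a guess. The success probability is
\[
\tfrac12\Pr[A=0\mid b=0]+\tfrac12\Pr[A=1\mid b=1]=\tfrac12\bigl(\Pr[A=0\mid b=0]+1-\Pr[A=0\mid b=1]\bigr)=\tfrac12.
\]
The last step uses the equality just proved, so the advantage is zero. This also holds for any prior on $b$: by the first part, $\tau$ and $b$ are independent, so $A$'s guess is independent of $b$ and matches it with probability at most $\max(\Pr[b=0],\Pr[b=1])$, the trivial guessing rate.

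The main obstacle is interpretive rather than technical. I need to explain where the capability-only definition enters. Formally, the equality needs only identical accepted-transcript distributions and that $A$ sees only $\tau$. The role of the capability-only condition is to justify why this hypothesis is natural for VDF transcripts. Because $V$ is a public predicate and $\tau$ carries no principal-bound secret, any party able to compute $\Eval$ on the same $(pp,x,T)$ produces the same $(y,\pi)$ whenever the construction has uniqueness. Alternatively, such a party can sample the same distribution. Either way, two such producers yield identical distributions. I will state this as a remark rather than fold it into the proof, so the lemma remains a clean indistinguishability statement.
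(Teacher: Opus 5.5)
Your proposal is correct and follows the paper's own argument: the paper's proof is just the observation that $A$'s view is identically distributed in the two experiments, so applying the same randomized algorithm gives identical output distributions and zero distinguishing advantage. You make this explicit through the random-tape decomposition $\sum_t D_b(t)\Pr_r[A(t;r)=0]$ and the guessing-game computation, which the paper leaves implicit.
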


\begin{proof}
The view of $A$ is a sample from the same distribution in both experiments. Applying the same randomized algorithm $A$ to identically distributed inputs yields identical output distributions. Therefore its distinguishing advantage is zero.
\end{proof}

This elementary observation is operationally important. A VDF proof may be difficult to generate, yet difficulty does not make a transcript identify its generator. If a normal producer can compute a proof eventually, its final valid proof has the same public verification behavior as the proof of a producer with a faster machine. Any identity inference therefore enters through an assumption outside the VDF transcript: a registered public key, a trusted observer, a physical measurement, or a prior agreement.

\section{The precomputation problem}

Consider the natural construction
\[
  x=H(m),\qquad (y,\pi)\leftarrow \Eval(pp,x,T),\qquad \tau=(m,T,y,\pi).
\]
It binds a proof to $m$ in the narrow sense that a proof for one message will not verify for a different message. It does not bind the computation to the time at which $m$ was displayed or submitted.

\begin{claim}[Offline precomputation]
If an adversary chooses $m$ before an alleged interaction, then it can compute $(y,\pi)$ before the interaction and release $\tau$ at any later time. The distribution of $\tau$ is the same as if evaluation began at release time.
\end{claim}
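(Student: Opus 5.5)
The proof is a simulation argument: the transcript $\tau=(m,T,y,\pi)$ is a deterministic function of the message and the public parameters, so nothing in it can carry information about \emph{when} $\Eval$ ran. I would compare two adversaries. The \emph{offline} adversary $\mathcal{A}_{\mathrm{pre}}$ fixes $m$ before the alleged interaction, computes $x=H(m)$ and $(y,\pi)\leftarrow\Eval(pp,x,T)$ at leisure, stores $\tau$, and publishes it at some chosen release time $t_r$. The \emph{online} adversary $\mathcal{A}_{\mathrm{on}}$ uses the same $m$ but starts $\Eval(pp,H(m),T)$ at $t_r$ and publishes once it finishes. The claim then has two parts: validity and release at any time, and equality of distributions.

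\textbf{Validity and release time.} By correctness of the VDF, the stored $\tau$ passes $\Verify(pp,H(m),T,y,\pi)=1$. The verification predicate in the natural construction takes only $(pp,m,T,y,\pi)$ as input, with no timestamp and no value fixed after $m$ was chosen. So $\tau$ verifies whenever it is presented. This gives the first sentence of the claim.

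\textbf{Equality of distributions.} I would condition on the adversary's choice of $m$, which is the same in both experiments. The joint law of $(pp,m)$ is therefore identical, and it remains to compare the conditional laws of $(y,\pi)$.
\begin{itemize}
\item If $\Eval$ is deterministic, $\tau$ is a fixed function of $(pp,m,T)$ and the two laws coincide trivially.
\item If $\Eval$ uses internal coins, those coins are sampled from the same distribution in both runs, independent of wall-clock time. Hence $(y,\pi)\mid(pp,m)$ is identically distributed regardless of when the computation starts.
\item For Wesolowski, $y$ is unique, and the prime $\ell$ is derived by Fiat--Shamir from $(x,y)$, so $\pi$ is deterministic as well.
\end{itemize}
Marginalizing over $m$ gives equal transcript distributions. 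Any verifier that sees only $\tau$ is then in exactly the situation of the Identity blindness lemma, with the two ``producers'' now being the offline and online adversaries. The same argument yields zero advantage in distinguishing precomputed from freshly computed transcripts.

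\textbf{Main obstacle.} The delicate point is stating precisely what ``distribution of $\tau$'' means, and making sure timing does not leak in through the back door. The argument only covers the transcript itself. If the verifier also observes the evaluation start or a challenge release time, the two experiments are no longer identical, and that is exactly the gap FCLA closes. I would therefore state explicitly that the claim concerns the verifier's view restricted to $\tau$ together with its arrival time $t_r$, which is chosen identically in both experiments. I would also note that the adversary's choice of $m$ must not depend on any post-interaction randomness, since that dependence is precisely what a fresh challenge would introduce.
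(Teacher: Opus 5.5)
Your proposal is correct and is essentially the paper's argument: $\tau$ depends only on $(pp,H(m),T)$ and $\Eval$'s internal coins, and none of these depend on a release-time event, so the verifier receives the same tuple whenever evaluation started; your deterministic/randomized case split, the appeal to correctness, and the link to the identity-blindness lemma just make this explicit. One slip in your final paragraph: your online adversary publishes only after $\Eval$ finishes, so its arrival time is $t_r$ plus the evaluation time rather than $t_r$, and adding arrival time to the view would break the equality you claim---state the result for $\tau$ alone, as the paper does, since that timing gap is exactly what the precomputation attack exploits.
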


\begin{proof}
The VDF input is the deterministic value $H(m)$ and has no dependence on an event occurring at release time. Evaluation can therefore be run as soon as $m$ is chosen. The public verifier receives the same tuple regardless of when the evaluator started.
\end{proof}

Appending a server timestamp after the fact does not repair this attack: it timestamps publication, not the beginning of computation. The input must include an unpredictable value that was unavailable before a public release event.

\section{Fresh Challenge Latency Attestation}

FCLA gives the minimal structure for a meaningful, limited claim. It is designed for a setting in which a claimant wishes to demonstrate that it answered a fresh public challenge more quickly than a stated conventional bound.

\subsection{Entities and assumptions}

The protocol has a public VDF parameter set $pp$, a collision-resistant hash function $H$, a public challenge source $B$, and one or more append-only witness logs $L_1,\ldots,L_k$. The source $B$ releases a value $r$ that is unpredictable before its release time $t_0$. A witness log records an authenticated receipt time $t_1$. The claimant may optionally attach a digital signature to associate the response with a public key; this optional addition proves control of a key, not the physical identity of a human or agent.

The protocol assumes: (i) correctness, proof soundness, and sequentiality for the selected VDF construction and parameters; (ii) unpredictability of $r$ before $t_0$; (iii) integrity and append-only availability of the witness records; (iv) collision resistance of $H$; and (v) a reproducibly calibrated upper rate $\rho$ for the modeled adversary. Removing any assumption weakens the inference correspondingly.

The trust model is substantive. If a beacon leaks $r$ early, is predictable, or equivocates by showing different challenges to different parties, precomputation may reappear. If witnesses collude with a claimant, backdate a receipt, equivocate about a log, or use manipulable clocks, $t_1-t_0$ is not trustworthy. ``Independent witnesses'' means that their operators, clock sources, network paths, and log-signing keys should not share a failure mode that permits a single actor to fabricate the required release and receipt history. A practical deployment should publish signed records, log consistency proofs, clock synchronization methods, and a threshold rule defining how many mutually independent records are required.

\subsection{Protocol}

\begin{enumerate}
  \item \textbf{Challenge release.} The beacon publishes a signed or otherwise independently auditable record $e_0=(\mathsf{sid},r,t_0)$, where $\mathsf{sid}$ is a unique session identifier.
  \item \textbf{Input derivation.} For the claimed message $m$ and delay $T$, compute
  \[
     x=H(\textsf{FCLA-v1}\parallel\mathsf{sid}\parallel r\parallel H(m)\parallel T\parallel \mathsf{ctx}),
  \]
  where $\mathsf{ctx}$ fixes the VDF parameters and application context.
  \item \textbf{Evaluation.} The claimant computes $(y,\pi)\leftarrow\Eval(pp,x,T)$ and sends $e_1=(\mathsf{sid},m,T,y,\pi)$ to the witnesses.
  \item \textbf{Receipt.} Each witness appends an authenticated receipt record $(H(e_1),t_1)$ to its public log.
  \item \textbf{Verification.} A verifier checks the beacon record, every selected witness receipt, recomputes $x$, verifies $\Verify(pp,x,T,y,\pi)=1$, and checks the stated timing rule $t_1-t_0<\Delta$.
\end{enumerate}

Including a domain separator, session identifier, parameter context, and message hash prevents accidental cross-protocol reuse. Multiple independent beacons or logs can reduce reliance on any single time authority, although they cannot eliminate the need for an assumption about recording events.

\subsection{Conditional latency proposition}

\begin{proposition}[FCLA latency claim, informal]
Suppose $r$ is unavailable to an adversary before $t_0$, the beacon and selected witness records are authentic and non-equivocating, $H$ is collision resistant, and the VDF satisfies proof soundness and sequentiality against adversaries whose sequential evaluation rate is at most $\rho$. If FCLA accepts a response with
\[
 t_1-t_0 < T/\rho,
\]
then, except with negligible probability, the response could not have been generated after release of $r$ by an adversary within that modeled rate bound. The conclusion does not attribute generation to a named claimant and does not exclude relaying, outsourced computation, or a faster machine outside the model.
\end{proposition}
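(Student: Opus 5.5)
The plan is a reduction by contradiction, organized as a case split over how an adversary $\mathcal{A}$ in the modeled class could produce an accepted transcript $(e_0,e_1,\text{receipts})$ with $t_1-t_0<T/\rho$. Fix such an $\mathcal{A}$, whose sequential evaluation rate is at most $\rho$, and suppose it makes FCLA accept with non-negligible probability. The aim is to show that each way it can succeed breaks one of the listed assumptions. In the successful case, the accepted tuple $(y,\pi)$ must then have been produced with fewer than $T$ sequential steps' worth of time available after $r$ became known.

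\textbf{Pinning down the timing and the input.} First, authenticity and non-equivocation of the beacon record mean that the $r$ used by the verifier is the unique value released at $t_0$. Authenticity of the selected witness receipts then means that a record $(H(e_1),t_1)$ exists with $t_1$ a genuine receipt time. By collision resistance of $H$, the $e_1$ recovered by the verifier is the one actually received at $t_1$, except with negligible probability; otherwise we obtain a collision on $H(e_1)$. In particular, $(y,\pi)$ and the recomputed $x=H(\textsf{FCLA-v1}\parallel\mathsf{sid}\parallel r\parallel H(m)\parallel T\parallel\mathsf{ctx})$ were fixed by time $t_1$. Second, because $r$ is unavailable before $t_0$, the derived $x$ is unpredictable before $t_0$. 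Any precomputation in the style of the Offline precomputation claim can only target guessed inputs. A pre-$t_0$ evaluation hits the real $x$ only if $\mathcal{A}$ predicted $r$, which happens with negligible probability. Otherwise it must produce a collision in $H$ by mapping a precomputed input to a different $(\mathsf{sid},r,m,T,\mathsf{ctx})$.

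\textbf{Invoking the VDF assumptions.} Third, $\Verify(pp,x,T,y,\pi)=1$. Proof soundness then says $y$ is the correct output on $x$, except with negligible probability; otherwise $\mathcal{A}$ is a soundness breaker. Finally, I build a sequentiality adversary $\mathcal{B}$. It receives a fresh VDF input, embeds it as the challenge-derived $x$ by programming the beacon value in the modeled setting, and runs $\mathcal{A}$ for the wall-clock window $t_1-t_0<T/\rho$. In that time, $\mathcal{A}$ and hence $\mathcal{B}$ executes fewer than $T$ sequential steps at rate $\rho$, yet it outputs a valid $(y,\pi)$. This contradicts sequentiality. Summing the negligible terms for beacon prediction, hash collisions, soundness, and sequentiality gives the statement.

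\textbf{Main obstacle.} The hardest step is the embedding in the sequentiality reduction. The input $x$ is a hash of $r$ and other fields, not a challenge $\mathcal{B}$ controls. I would first treat $H$ as a random oracle and program it so that the query on the released $r$ returns $\mathcal{B}$'s challenge. If that is unacceptable, I would instead state the VDF assumption directly over inputs $H(\cdot\parallel r\parallel\cdot)$ with $r$ unpredictable. Because the proposition is informal, I would flag this modeling choice explicitly.

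The final sentence of the proposition needs no further argument. The analysis only constrains computation within the rate model and within the interval $[t_0,t_1]$. Who ran that computation is outside the transcript, by the Identity blindness lemma. A relay or outsourced evaluator is simply another party, either inside the model or outside it.
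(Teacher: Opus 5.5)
Your proposal is correct and follows the same route as the paper's proof sketch. Beacon unpredictability rules out evaluating on $x$ before $t_0$, authentic non-equivocating records plus collision resistance of $H$ pin down the transcript and the window $t_1-t_0$, and any post-$t_0$ success in under $T/\rho$ seconds contradicts sequentiality. You are more explicit than the sketch in two places the paper leaves implicit: you invoke proof soundness to reduce to the correct output, and you address how a sequentiality challenge is embedded into $x$ (with many adversarial $H$-queries on the released $r$ and different $m$, the random-oracle reduction must guess which query to program, losing only a polynomial factor).
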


\begin{proof}[Proof sketch]
The verified transcript fixes $x$ to the unpredictable value $r$, the session, the message, delay, and context. Before $t_0$, the adversary cannot know $x$ except with negligible probability under the beacon-unpredictability assumption. After $t_0$, producing an accepted pair for $x$ in fewer than $T/\rho$ seconds contradicts VDF sequentiality at the assumed rate, unless the adversary finds a hash collision or falsifies a timing record. Each alternative is excluded by an assumption.
\end{proof}

The conclusion is intentionally conditional. It does \emph{not} say that a claimant performed the computation, that no faster machine exists, or that the VDF was completed in less than an objective physical time. It says only that the recorded result is inconsistent with a specified adversary model.

\subsection{Calibrating the evaluator model}

The rate bound $\rho$ is not an implementation detail: it is the bridge between a cryptographic sequentiality statement and a real-world latency claim. Before a deployment uses FCLA to exclude a model, it should publish a reproducible calibration protocol. At minimum, this protocol should specify the exact VDF implementation and parameters; the hardware class, operating system, compiler, and optimization flags; all permitted specialized hardware; the permitted parallelism and algorithmic optimizations; the benchmark inputs and number of trials; and a conservative statistical rule for converting measured step rates into an upper bound. It must also budget network delay, witness processing delay, clock synchronization error, and clock uncertainty in a publicly stated margin $\epsilon$.

Calibration must be adversary-aware. Benchmarking an ordinary laptop does not justify a bound against GPUs, FPGAs, ASICs, cloud clusters, optimized implementations, or outsourced evaluators unless those possibilities are expressly excluded by the evaluator model. The timing rule should therefore be stated as, for example,
\[
  t_1-t_0+\epsilon < T/\rho,
\]
where the published $\rho$ and $\epsilon$ are selected before the challenge and reproducible by third parties. This makes an FCLA claim falsifiable: anyone may challenge the stated rate or timing margin with a better measurement or a faster evaluator within the declared model.

\section{Experimental evaluation}

We measured the public VDP implementation at commit \texttt{bb111dc} (the full identifier is supplied with the ancillary benchmark files), which invokes the \texttt{vdf} crate's 512-bit Wesolowski parameters. The experiment measures the behavior of this particular implementation and configuration; it neither experimentally proves VDF sequentiality nor establishes a trusted beacon or witness service. In particular, the results below do not justify a rate bound for GPUs, FPGAs, ASICs, cloud systems, outsourced evaluators, or any machine outside the stated model.

\begin{center}
\small
\begin{tabular}{ll}
\toprule
Item & Configuration \\
\midrule
Machine & Apple M4 system; 10 logical cores; 24 GB memory \\
Operating system & macOS 15.5 \\
Software & Rust 1.96.0; Cargo release build; VDP commit \texttt{bb111dc} \\
VDF & Wesolowski VDF, 512-bit parameters; fixed public byte-string input \\
Protocol & One unreported warm-up; 10 timed trials per phase and delay \\
Timing & Rust \texttt{Instant} monotonic clock; end-to-end \texttt{solve}/\texttt{verify} calls \\
\bottomrule
\end{tabular}
\end{center}

Before the full run, a three-trial calibration at $T=10^4$ gave a median evaluation time below one second. We therefore selected six approximately logarithmically spaced values from $10^2$ through $10^6$, keeping the largest median evaluation well below 60 seconds. For every timed evaluation, the resulting proof was verified successfully; for each delay, a proof was also checked against an altered input and rejected. Table~\ref{tab:benchmark} reports medians and sample standard deviations across the 10 timed trials.

\begin{table}[H]
\centering
\small
\caption{Measured VDP timings on the stated Apple M4 configuration. Times are median $\pm$ sample standard deviation in milliseconds; the rate is $T$ divided by median evaluation time.}
\label{tab:benchmark}
\begin{tabular}{r r r r}
\toprule
$T$ & Evaluation (ms) & Verification (ms) & Iterations/s \\
\midrule
100       & $10.024 \pm 0.075$ & $9.800 \pm 0.057$  & 9,976 \\
631       & $15.472 \pm 0.657$ & $12.297 \pm 1.156$ & 40,784 \\
3,981     & $41.580 \pm 0.580$ & $12.235 \pm 0.163$ & 95,743 \\
25,119    & $199.303 \pm 3.416$ & $12.009 \pm 0.372$ & 126,034 \\
158,489   & $1,140.304 \pm 24.371$ & $11.780 \pm 0.110$ & 138,988 \\
1,000,000 & $7,021.828 \pm 117.587$ & $11.691 \pm 0.160$ & 142,413 \\
\bottomrule
\end{tabular}
\end{table}

\begin{figure}[H]
\centering
\includegraphics[width=0.88\linewidth]{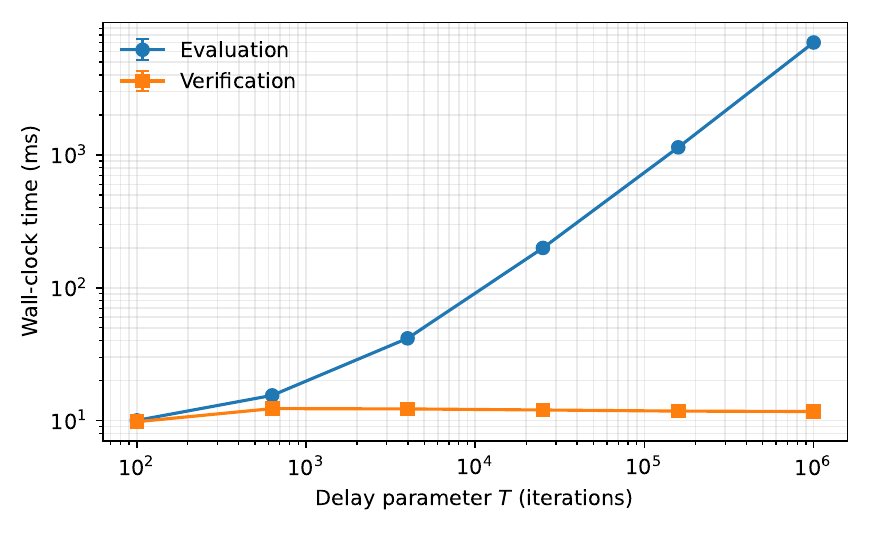}
\caption{Evaluation and verification wall-clock time versus delay. Both axes are logarithmic; error bars show sample standard deviations over 10 trials.}
\label{fig:benchmark}
\end{figure}

Figure~\ref{fig:benchmark} makes the intended asymmetry visible: evaluation increases substantially with $T$, while verification remains close to 12 ms throughout the larger-delay range. The small-$T$ measurements include a fixed cost that explains the lower apparent evaluation rate at the first points; at $T=10^6$, the observed rate is approximately $1.42\times10^5$ iterations per second on this machine. This number is descriptive, not a universal $\rho$. A deployment must choose a conservative rate bound for its declared adversary class before challenge release, together with its timing margin $\epsilon$.

\section{Design consequences for public VDF boards}

An implementation that accepts $(m,T,y,\pi)$ can accurately advertise ``public verification of message-bound sequential work.'' It should not advertise ``authorship'' or ``proof of a being's nature.'' To support FCLA, a deployment must add:

\begin{itemize}
  \item a documented source of unpredictable public challenges;
  \item durable, independently auditable records for release and receipt events, including a threshold rule and log-consistency evidence for multiple witnesses;
  \item a parameter document specifying the VDF group, security level, setup and parameter-generation assumptions, and the calibration procedure for $\rho$;
  \item canonical byte encoding and domain separation for all signed, hashed, and VDF-derived values;
  \item replay protection using a unique session identifier; and
  \item an explicit statement that optional signatures bind a response to a key, while identity remains a social or institutional assertion; and
  \item a threat-model document covering beacon leakage or equivocation, witness collusion, clock manipulation, network delay, relaying, outsourcing, and unmodeled accelerators.
\end{itemize}

The choice of $T$ should be made from a target latency claim, not a rhetorical tier. For example, a target $\Delta$ requires $T>\rho\Delta$ with a conservative estimate of $\rho$ and a margin for clock error, network delay, and benchmark variance. A value that takes hours on a current laptop is evidence of work only if the protocol prevents offline preparation and specifies which class of evaluators is being ruled out.

\section{Limitations and scope}

Several limitations are fundamental. First, a VDF's sequentiality is conditional on the chosen construction, setup assumptions, computational model, and the absence of an undiscovered algorithmic shortcut. Second, external timestamps establish only the trust properties of the systems recording them. Third, a relay, outsourced evaluator, or faster unmodeled machine is consistent with an accepted transcript. Fourth, a highly capable but modeled adversary can always be accommodated by raising $\rho$; no finite measurement rules out all conceivable machines. Finally, even a response that exceeds a stated bound is evidence about a transcript and model, not a conclusion about consciousness, intention, divinity, or any other metaphysical predicate.

These limitations do not make the protocol useless. They direct it toward applications where model-relative evidence is valuable: public challenge benchmarks, transparent claims about specialized hardware, timed competitions, and auditable service-level demonstrations. For those applications, stating the limits precisely is a feature, not a concession.

\section{Conclusion}

VDFs provide publicly verifiable sequential-work evidence, not identity or metaphysical authentication. We used the precomputation attack and an identity-blindness observation to delimit that evidence, then specified FCLA as a composition of a VDF, fresh challenge, and auditable timing records. Its strongest practical inference is deliberately narrow: a fresh, publicly witnessed response can be inconsistent with an explicitly calibrated class of evaluators. This gives a rigorous foundation for a public VDF board while avoiding claims that the underlying cryptography cannot justify.

\section*{Acknowledgements}

The authors thank Rustem Takhanov for his helpful feedback on the manuscript and for suggesting the experimental evaluation.

\paragraph{Reproducibility.} A reference implementation of the message-bound VDF demonstration is available at \url{https://github.com/AnsarYesma/vdp}. The ancillary \texttt{experiments/} directory supplies the benchmark source, raw per-trial CSV data, analysis script, summary data, and plot used above. We also implemented and tested a minimal Rust FCLA transcript-binding layer against that codebase; it uses a domain-separated, length-delimited SHA-256 input, validates deadline ordering, and verifies that a proof fails when the message changes. The benchmark is not a deployment or an FCLA latency-attestation experiment: a complete FCLA service must still verify external beacon and witness-log records before making any empirical latency claim.

\bibliographystyle{plain}
\bibliography{references}

\end{document}